\newtheorem{property}[theorem]{Property}
\title[Cubic spline approximation]{Cubic spline approximation of the reliability polynomials of two dual hammock networks}
\author[G. Cristescu]{Gabriela Cristescu}
\address[G. Cristescu]{Department of Mathematics and Computer Sciences,\newline\indent ''Aurel Vlaicu'' University of Arad, Bd. Revolu\c{t}iei, No. 77, 310130-Arad, Romania.}
\email{{\tt gabriela.cristescu@uav.ro}}
\author[V-F. Dr\v{a}goi]{Vlad-Florin Dr\v{a}goi}
\address[V-F. Dr\v{a}goi]{Department of Mathematics and Computer Sciences,\newline\indent "Aurel Vlaicu" University of Arad, Bd. Revolu\c{t}iei, No. 77, 310130-Arad, Romania, and\newline\indent
LITIS, University of Rouen Normandie, Avenue de l’universit\'{e}, 76801 Saint-\'{E}tienne-du-Rouvray, France.}
\email{\tt vlad.dragoi@uav.ro}
\subjclass[2010]{41A15, 41A29,68Q17}
\keywords{approximation, Bernstein basis, complementarity, dual networks, hammock network, reliability polynomial, spline function}
\begin{document}

\begin{abstract}
The property of preserving the convexity and concavity of the Bernstein polynomial and of the B\'{e}zier curves is used to generate a method of approximating the reliability polynomial of a hammock network. The mutual behaviour of the reliability polynomials of two dual hammock networks is used to generate a system of constraints since the initial information is not enough for using a classical approximation scheme. A cubic spline function is constructed to generate approximations of the coefficients of the two reliability polynomials. As consequence, an approximation algorithm is described and tested through simulations on hammocks with known reliability, comparing the results with the results of approximations attempts from literature.
\end{abstract}

\maketitle

\section{Introduction}

Estimating the reliability of a two-terminal network (2TN) is a long standing problem in network reliability. It started with the seminal work of Moore and Shannon \cite{MS1,MS2}. In their model the reliability of a two-terminal network ($N$) is defined as the $(s,t)$ connectedness of $N$. However, their methods for improving the reliability take into account the connectedness, as well as the non-connectedness, as $N$ is intended to work as a switch. Hence, they propose a particular family of networks, known as hammocks. Even though, they did not provide any proof of the fact that for a given length $l$ and width $w$, hammocks are the most reliable minimal two-terminal networks, this fact was verified for small values of $w$ and $l$ \cite{CBDP2}. Despite of its particular strong structure, there is no efficient algorithm for computing the reliability of a hammock network. In general, for a random two-terminal network the combinatorial and reliability problems are difficult \cite{B1,V1}.

There are two directions on which the research community followed this topic: firstly, decreasing the complexity of the state-of-the-art algorithms for the exact computation of the reliability polynomials, and secondly, estimating the coefficients of the reliability polynomial for more accurate approximations. Here, we will tackle the second direction and propose a new method for approximating the reliability polynomial of a hammock network, having a very low time complexity. Based on the latest results of Huh \cite{H} and Lenz \cite{L1} the sequence of the coefficients of a the reliability polynomial is log-concave. 
We propose an algorithm that uses the shape preserving properties of the Bernstein-B\'{e}zier type approximation operators in order to approximate the reliability polynomial in Section \ref{sec:approx}.

The structure of the paper is as follows. In Section \ref{sec:2} we introduce a concept of complementarity of two functions with respect to an operator. This is used all over the paper in order to describe the relationship between the reliability polynomials of two dual 2TNs, as presented in Section \ref{sec:3}. In Section \ref{sec:approx} we generate a mathematical model for simultaneously approximating the reliability polynomials of two dual 2TNs, in order to preserve the shape, in terms of shape of epigraph \cite{CL} following to the results from \cite{H} and \cite{L1}. An algorithm is elaborated for computing the approximant objects. Finally, numerical examples are presented, taking into account existing exact data about small size 2TNs \cite{CBDP2}. The results obtained using the new technique are compared with the approximations constructed by other authors \cite{BC1}.

\section{$\Delta$-complementary functions with respect to an operator}\label{sec:2}

Denote by $\mathbb{R}$ the set of real number and by $\mathbb{N}$ the set of natural numbers, $\mathbb{N^{*}}=\mathbb{N} \setminus \{0\}$. Suppose that $\mathfrak{F}$ is a class of real functions defined on some structured set and $\mathcal{O}:\mathfrak{F}\rightarrow \mathbb{R}$ is an operator.

\begin{definition}\label{Com}
Let $\Delta \in \mathbb{R}$. Two functions $f\in \mathfrak{F}$ and $g\in \mathfrak{F}$ are said to be $\Delta$-complementary with respect to operator $\mathcal{O}$ if
\begin{equation}\label{DCom}
\mathcal{O}(f) + \mathcal{O}(g) = \Delta.
\end{equation}
\end{definition}
\begin{remark}
It is obvious that the definition is consistent. Indeed, let us suppose that $\mathfrak{F}$ is a $n$-dimensional real linear space and $\{e_1, e_2, ..., e_n\}\subset \mathfrak{F}$ is a basis of this space. Suppose that $\mathcal{O}$ is a linear operator and $\mathcal{O}(e_1)=a \in \mathbb{R}$. Then functions $f\in \mathfrak{F}$ and $g\in \mathfrak{F}$ defined by
$$f=\alpha e_1,$$
$$g= \frac{\Delta-\alpha a}{a}e_1,$$
with $\alpha \in \mathbb{R}$, are $\Delta$-complementary with respect $\mathcal{O}$.
\end{remark}

\medskip

\section{Complementarity in hammocks}\label{sec:3}

Consider a hammock network $H_{(l,w)}$ of width $w\in \mathbb{N^{*}}$ and length $l\in \mathbb{N^{*}}$ and denote $n=lw$. The reliability polynomial of this network may be written, in Bernstein basis, $\{C_{n}^{k}p^{k}(1-p)^{n-k}| k\in \{0,1,...,n\}\}$ as:
\begin{equation}\label{h}
h_{(l,w)}(p)=\sum_{k=0}^{n}N_{k}p^{k}(1-p)^{n-k},
\end{equation}
where $p\in [0,1]$ and all coefficients $N_{k}=C_{n}^{k}a_k$  are non-negative real numbers, $a_k$ being the coefficients of $h_{(l,w)}$ in this basis, $k\in \{0,1,2,...,n\}$. Formula (\ref{h}) is known as the N-form of the reliability polynomial in the technical literature. Due to its frequent use in the literature, we will use the N-form in the sequel, referring by an abuse of language to $N_k$ as the coefficients of $h_{(l,w)}$ in Bernstein basis. More properties of the reliability polynomials are in \cite{MS1} and \cite{MS2}. The reliability polynomial is fully known in case of small dimension hammocks (see \cite{CBDP1}, \cite{CBDP2}). According to \cite{MS1}, the hammocks $H_{(l,w)}$ and $H_{(w,l)}$ are said to be dual hammocks. Suppose that
\begin{equation}\label{hd}
h_{(w,l)}(p)=\sum_{k=0}^{n}N_{k}^{\perp} p^{k}(1-p)^{n-k},
\end{equation}
with $N_{k}^{\perp}\geq 0$, $k\in \{0,1,2,...,n\}$ is the reliability polynomial of the dual hammock network $H_{(w,l)}$ of $H_{(l,w)}$.

It is proved in \cite{MS1}, formula (6) pp.197, that
\begin{equation}\label{hMS}
h_{(l,w)}(p)+h_{(w,l)}(1-p)=1.
\end{equation}

\begin{property}
Suppose that $H_{(l,w)}$ and $H_{(w,l)}$ are dual hammock networks and $h_{(l,w)}$ and $h_{(w,l)}$ defined by (\ref{h}) and (\ref{hd}) are their reliability polynomials, respectively. Then
\begin{equation}\label{su}
\sum_{k=0}^{n}N_{k}+\sum_{k=0}^{n}N_{k}^{\perp}= 2^{n}.
\end{equation}
\end{property}

\begin{proof} Since $$1=(1-p+p)^n =\sum_{k=0}^{n}C_{n}^{k} p^{k}(1-p)^{n-k},$$ $C_{n}^{k}$ denoting the combinations of $n$ elements taken by $k$, we have
\begin{equation}\label{h1}
h_{(l,w)}(p)+h_{(w,l)}(1-p)=\sum_{k=0}^{n} C_{n}^{k} p^{k}(1-p)^{n-k},
\end{equation}
for all $p\in [0,1]$. Taking $p=\frac{1}{2}$ in (\ref{h1}) one gets
$$\sum_{k=0}^{n}N_{k}\left(\frac{1}{2}\right)^{n}+\sum_{k=0}^{n}N_{k}^{\perp}\left(\frac{1}{2}\right)^{n}=\sum_{k=0}^{n} C_{n}^{k}\left(\frac{1}{2}\right)^{n}.$$
Dividing this equality by $\left(\frac{1}{2}\right)^{n}$ one gets (\ref{su}).
\end{proof}

\begin{remark}
Let us suppose that $\mathfrak{F}$ is the set of all polynomials $u:[0,1]\rightarrow [0,1]$ of degree at most $n=lw$, expressed in Bernstein basis, and operator $\mathcal{O}$ puts in correspondence each polynomial $u$ with the sum of its coefficients. Then (\ref{su}) means that the reliability polynomials of two dual hammocks are $2^n$-complementary functions with respect to the sum of their coefficients.
\end{remark}

\begin{remark}\label{FB}
Consider a hammock network $H_{(l,w)}$ and its dual $H_{(w,l)}$, with their reliability polynomials denoted as above. Let us define two functions
$F_{(l,w)}:[0,n]\rightarrow \mathbb{R}$ and $F_{(w,l)}:[0,n]\rightarrow \mathbb{R}$ by
\begin{equation}\label{F}
F_{(l,w)}(x)=
\begin{cases}
0, & \text{if $x=0$}\\
(N_k-N_{k-1})x+kN_{k-1}-(k-1)N_k, & \text{if $x\in [k-1,k], k\in\{1,2,...,n\}$}.
\end{cases}
\end{equation}
\begin{equation}\label{Fd}
F_{(w,l)}(x)=
\begin{cases}
0, & \text{if $x=0$}\\
(N_k^{\perp}-N_{k-1}^{\perp})x+kN_{k-1}^{\perp}-(k-1)N_k^{\perp}, & \text{if $x\in [k-1,k], k\in\{1,2,...,n\}$}.
\end{cases}
\end{equation}
We refer to function $F_{(l,w)}$ (respectively $F_{(w,l)}$) as the coefficients segmentary linear function of hammock network $H_{(l,w)}$ (respectively $H_{(w,l)}$). It is obvious that number $\int_{0}^{n}F_{(l,w)}(x)dx=\frac{1}{2}+\sum_{k=0}^{n} N_k$ (respectively $\int_{0}^{n}F_{(w,l)}(x)dx=\frac{1}{2}+\sum_{k=0}^{n} N_k^{\perp}$) equals to the area of the subgraph of function $F_{(l,w)}$ (respectively $F_{(w,l)}$) on $[0,n]$. We define, in the same manner, a function $B:[0,n]\rightarrow \mathbb{R}$ by
\begin{equation}\label{B}
B(x)=
\begin{cases}
1, & \text{if $x=0$}\\
(C_{n}^k-C_{n}^{k-1})x+kC_{n}^{k-1}-(k-1)C_{n}^k, & \text{if $x\in [k-1,k], k\in\{1,2,...,n\}$}.
\end{cases}
\end{equation}
The area of the subgraph of function $B$ on $[0,n]$ equals to $2^n$. Equation (\ref{su}) is equivalent to
\begin{equation}
\int_{0}^{n}F_{(l,w)}(x)dx +\int_{0}^{n}F_{(w,l)}(x)dx=2^n,
\end{equation}
which means that functions $F_{(l,w)}$ and $F_{(w,l)}$ are $2^n$-complementary with respect to the definite integration over $[0,n]$.
\end{remark}

\begin{property}
Suppose that $H_{(l,w)}$ and $H_{(w,l)}$ are dual hammock networks and $h_{(l,w)}$ and $h_{(w,l)}$ defined by (\ref{h}) and (\ref{hd}) are their reliability polynomials, respectively. Then
\begin{equation}\label{suCo}
N_{k}+N_{n-k}^{\perp} = C_{n}^k,
\end{equation}
for all $k\in \{0,1,2,...,n\}.$
\end{property}
\begin{proof} Consider the linear space of polynomials of degree at most $n$ defined on $[0,1]$ expressed with respect to Bernstein basis. A polynomial in this space is a linear combination of the elements of this basis. As known, two polynomials expressed in the same basis are identical if and only if they have the same coefficients. Since (\ref{h1}) may be written as
$$ h_{(l,w)}(p)+h_{(w,l)}(1-p)=\sum_{k=0}^{n} C_{n}^{k} p^{k}(1-p)^{n-k},$$
it means that
\begin{align*}
\sum_{k=0}^{n} C_{n}^{k} p^{k}(1-p)^{n-k} &= \sum_{k=0}^{n}N_{k}p^{k}(1-p)^{n-k}+ \sum_{k=0}^{n}N_{k}^{\perp} p^{n-k}(1-p)^{k}\\
																		&= \sum_{k=0}^{n}(N_{k}+N_{n-k}^{\perp})p^{k}(1-p)^{n-k}
\end{align*}
for all $p\in [0,1]$ and for all $k\in \{0,1,2,...,n\}$. This identity holds if and only if (\ref{suCo}) is valid.
\end{proof}

\medskip

\section{Simultaneous approximation of the reliability polynomials of two dual hammock networks}\label{sec:approx}

Consider a hammock network $H_{(l,w)}$ of width $w\in \mathbb{N^{*}}$ and length $l\in \mathbb{N^{*}}$ and denote $n=lw$. The reliability polynomial of this network, expressed in Bernstein basis, is $h_{(l,w)}$ defined by (\ref{h}). Knowing the reliability polynomial $h_{(l,w)}$ is equivalent to knowing the corresponding function $F_{(l,w)}$ defined by (\ref{F}). We consider the dual network $H_{(w,l)}$, together with the corresponding functions defined above by (\ref{hd}) and (\ref{Fd}). In his section we intend to build a method of approximation of functions $F_{(l,w)}$ and $F_{(w,l)}$ by means of a cubic spline function, starting from the properties of the reliability polynomials described in \cite{CBDP1} and \cite{CBDP2}. Some generalized convexity properties as described in \cite{CL} will be used. We construct segmentary cubic polynomials meant to imitate the shape of functions $F_{(l,w)}$ and $F_{(w,l)}$. As proved in \cite{TP}, given a continuous function on a bounded closed interval, the Bernstein approximation polynomial of degree $s$ of this function preserves the convexity of the approximated function (see also \cite{L2} and \cite{N}). This property gave us the idea of approximating functions $F_{(l,w)}$ and $F_{(w,l)}$ by means of polynomials imitating the Bernstein polynomial of third degree. As known, the Bernstein approximation polynomial of degree $s$ of a continuous function is defined by using the values of the approximated function on $s+1$ equidistant knots, including the extremities of the interval taken into account. Our attempt of approximating functions $F_{(l,w)}$ and $F_{(w,l)}$ do not benefit of information of this kind. The framework of the approximation described in this section is given by incomplete data and non-equidistant knots.

\subsection{Mathematical model}

The known data on the coefficients of $h_{(l,w)}$ and $h_{(w,l)}$, written in terms of functions $F_{(l,w)}$ and $F_{(w,l)}$, are:

\begin{equation}
\left\{
  \begin{array}{l @{} l  @{} l}
    F_{(l,w)}(k)&=0, &\forall k\in \{0, 1,...,l-1\}\\
	F_{(l,w)}(l)&=N_{l} >0& \\
    F_{(l,w)}(l+t)&=N_{l+t} > N_{l} &\\
    F_{(l,w)}(k)&=C_{n}^{k},& \forall k\in \{n-w+1, n-w+2,...,n\}
  \end{array}
\right.
\end{equation}
where $t$ is some fixed number belonging to $\{l+1, l+2, ..., n-w\}.$
\begin{equation}
\left\{
  \begin{array}{l@{} l  @{} l}
    F_{(w,l)}(0)=0, &\forall k\in \{0, 1,...,w-1\}\\
    F_{(w,l)}(w)=N_{w}^{\perp} > 0& \\
    F_{(w,l)}(w+s)=N_{w+s}^{\perp} > N_{w}^{\perp} &\\
    F_{(w,l)}(k)=C_{n}^{k}, & \forall k\in \{n-l+1, n-l+2,...,n\}
  \end{array}
\right.
\end{equation}
with $s$ having some fixed value from $\{w+1, w+2, ..., n-l\}$, $s\neq t$, $s\neq n-t.$
Values $N_{l}$, $N_{w}^{\perp}$, $N_{l+t}$ and $N_{w+s}^{\perp}$, are known from \cite{CBDP1}, \cite{CBDP2} or may be computed or measured by means of other techniques.

In order to write the Bernstein approximation polynomial of degree $s$ of a function on $[a,b]$, the values of this function on $s+1$ equidistant points from this interval, including the extremities, are needed. As one can see, the above data do not provide sufficient information from the perspective of defining an enough refined division of interval $[0,n]$ in order to approximate function $F_{(l,w)}$ and $F_{(w,l)}$ with a convenient error, according to the known results on the degree of approximation from the mathematical literature. As consequence, we need to use more information for building a satisfying approximant of $F_{(l,w)}$ and $F_{(w,l)}$. We prefer to generate approximations of these two functions by imitating the Bernstein-type approximation in this case of incomplete data. We replace the missing knowledge on the values of the approximated function on equidistant intermediary points by some conditions on some bridge points, constructed according to (\ref{su}) and (\ref{suCo}). We call this process of approximation a \emph{pseudo-Bernstein type approximation}. \\
In order to approximate $F_{(l,w)}$ and $F_{(w,l)}$ we construct two continuous cubic spline functions $f_{(l,w)}:[0,n]\rightarrow \mathbb{R}$ and $f_{(w,l)}:[0,n]\rightarrow \mathbb{R}$ that verify the following conditions:

\begin{equation}
\left\{
  \begin{array}{ll}
    f_{(l,w)}(k)=0, &k\in\{0,\dots,l-1\}\\
    f_{(l,w)}(l)=N_{l}& \\
    f_{(l,w)}(n-w+k)=C_{n}^{w-k},& k\in \{0,\dots,w-1\},
  \end{array}
\right.
\end{equation}

\begin{equation}
\left\{
  \begin{array}{ll}
    f_{(w,l)}(k)=0,&k\in\{0,\dots,w-1\} \\
    f_{(w,l)}(w)=N_{w}^{\perp} &\\
    f_{(w,l)}(n-l+k)=C_{n}^{l-k},& k\in \{0,\dots,l-1\}.
  \end{array}
\right.
\end{equation}
and the connecting conditions resulting from the duality and complementarity properties discussed in the previous section:
\begin{equation}
\left\{
  \begin{array}{ll}
    f_{(l,w)}(x_1)+f_{(w,l)}(n-x_1)=C_n^{x_1}, & \text{and either} \\
    f_{(l,w)}(x_2)+f_{(w,l)}(n-x_2)=C_n^{x_2}, & \text{or} \\
    \int_{0}^{n}f_{(l,w)}(x)dx +\int_{0}^{n}f_{(w,l)}(x)dx=2^n, & \text{or}\\
    \sum_{k=0}^{n}f_{(l,w)}(k)+\sum_{k=0}^{n}f_{(w,l)}(k)=2^n.
  \end{array}
\right.
\end{equation}
Here $x_1$ and $x_2$ are two natural numbers taken from interval $(\max\{l+t,w+s\}, \min\{n-w+s,n-l+t\})$.\\
In order to define the two functions $f_{(l,w)}$ and $f_{(w,l)}$ we take into account the properties of operators defined by S.N. Bernstein \cite{Ber}, T. Popoviciu \cite{TP}, P. B\'{e}zier \cite{Bez1} \cite{Bez2} for approximation of continuous functions and of plane curves. These types of polynomial operators preserve some shape properties of the approximated curve, as superior order convexity and concavity (see \cite{TP}). Function $f_{(l,w)}$ is searched as:
\begin{equation}\label{flw}
f_{(l,w)}(x)=
\begin{cases}
0, & \text{if $0\leq x\leq l-1$}\\
N_{l} x+N_{l}(1-l), & \text{if $l-1< x\leq l$}\\
\tilde{B}_{(l,w)}(x), & \text{if $l<x\leq n-w$}\\
d_{(l,w)}(k)(x), & \text{if $x\in (k-1, k], k\in\{n-w+1,...,n\}$}
\end{cases}
\end{equation}
Here
\begin{align*}
d_{(l,w)}(n-w+1)(x)&=(C_{n}^{w-1}-N_{n-w})x+N_{n-w}(n-w+1)-C_{n}^{w-1}(n-w),\\
d_{(l,w)}(k)(x)&=(C_{n}^{k}-C_{n}^{k-1})x+kC_n^{k-1} - (k-1)C_{n}^{k}
\end{align*}
are the straight line segments determined by points $(k-1, C_n^{k-1})$ and $(k, C_n^{k})$, for all $k\in\{n-w+2,...,n\}$ respectively. Also, $\tilde{B}(l,w)$ denotes the cubic pseudo-Bernstein type approximation polynomial,
\begin{align*}
\tilde{B}_{(l,w)}(x)&=\frac{1}{(n-w-l)^3}\left[N_{l} (n-w-x)^3 +aC_3^{1}(n-w-x)^2 (x-l)\right]\\
&+\frac{1}{(n-w-l)^3}\left[bC_3^{2}(n-w-x)(x-l)^2 + N_{n-w}(x-l)^3\right],
\end{align*}
that takes the values $N_{l}$ and $N_{n-w}$ at the extremities of interval $(l, n-w].$
Function $f_{(w,l)}$ is searched as:
\begin{equation}\label{fwl}
f_{(w,l)}(x)=
\begin{cases}
0, & \text{if $0\leq x\leq w-1$}\\
N_{w}^{\perp} x+N_{w}^{\perp}(1-w), & \text{if $w-1< x\leq w$}\\
\tilde{B}_{(w,l)}(x), & \text{if $w<x\leq n-l$}\\
d_{(w,l)}(k)(x), & \text{if $x\in (k-1, k], k\in\{n-l+1,...,n\}$}
\end{cases}
\end{equation}
Here
\begin{align*}
d_{(w,l)}(n-l+1)(x)&=(C_{n}^{l-1}-N_{n-l}^{\perp})x+N_{n-l}^{\perp}(n-l+1)-C_{n}^{l-1}(n-l),\\
d_{(w,l)}(k)&=(C_{n}^{k}-C_{n}^{k-1})x+kC_n^{k-1} - (k-1)C_{n}^{k}
\end{align*}
are the straight line segments determined by points $(k-1, C_n^{k-1})$ and $(k, C_n^{k})$, for all $k\in\{n-l+2,...,n\}$. As above,
\begin{align*}
\tilde{B}_{(w,l)}(x)&=\frac{1}{(n-w-l)^3}\left[N_{w}^{\perp}(n-l-x)^3 +cC_3^{1}(n-l-x)^2 (x-w)\right]\\
&+\frac{1}{(n-w-l)^3}\left[dC_3^{2}(n-l-x)(x-w)^2 + N_{n-l}^{\perp}(x-w)^3\right],
\end{align*}
takes the values $N_{w}^{\perp}$ and $N_{n-l}^{\perp}$ at the extremities of interval $(w, n-l].$
The values $N_{l}$, $N_{l+t}$, $N_{w}^{\perp}$ and $N_{w+s}^{\perp}$ are computed using \cite{BC1}, and the values $N_{n-w}$ and $N_{n-l}^{\perp}$ are obtained by means of (\ref{suCo}).
The system of equations produced by using the above mentioned conditions in order to compute the coefficients $a, b, c, d$ of functions $f_{(l,w)}$ and $f_{(w,l)}$ is:
\begin{equation}
\left\{
  \begin{array}{ll}
    f_{(l,w)}(l+t)=N_{l+t} \\
    f_{(w,l)}(w+s)=N_{w+s}^{\perp} \\
    f_{(l,w)}(x_1)+f_{(w,l)}(n-x_{1})=C_{n}^{x_{1}}, & \text{and either} \\
    f_{(l,w)}(x_2)+f_{(w,l)}(n-x_2)=C_n^{x_2}, & \text{or} \\
    \int_{0}^{n}f_{(l,w)}(x)dx +\int_{0}^{n}f_{(w,l)}(x)dx=2^n, & \text{or}\\
    \sum_{k=0}^{n}f_{(l,w)}(k)+\sum_{k=0}^{n}f_{(w,l)}(k)=2^n.
  \end{array}
\right.
\end{equation}

The best results in applications were obtained taking $x_2 =n-x_1$ in the fourth equation. The third and fourth equations are consequences of both the fifth and the sixth equations as proved by (\ref{suCo}). Hence, we use the system of equations consisting in the first four equations. Let us introduce the following notations in order to write the detailed form of the above system of equations:
$$p_{(l,w)}^{(s;k)}(x)=(x-l)^{k}(n-w-x)^{s-k},$$
$$p_{(w,l)}^{(s;k)}(x)=(x-w)^{k}(n-l-x)^{s-k}.$$
As consequence, the above system of equations becomes:
\begin{equation}\label{system}
\left\{
  \begin{array}{ll}
   3(n-w-l-t)^{2}ta+ 3(n-w-l-t)t^{2}b=A_1 \\
   3(n-w-l-s)^{2}sc+ 3(n-w-l-s)s^{2}d=A_2 \\
   3p_{(l,w)}^{(3;1)}(x_1)a+3p_{(l,w)}^{(3;2)}(x_1)b+3p_{(w,l)}^{(3;1)}(n-x_1)c+3p_{(w,l)}^{(3;2)}(n-x_1)d=A_3 \\
   3p_{(l,w)}^{(3;1)}(x_2)a+3p_{(l,w)}^{(3;2)}(x_2)b+3p_{(w,l)}^{(3;1)}(n-x_2)c+3p_{(w,l)}^{(3;2)}(n-x_2)d=A_4,
  \end{array}
\right.
\end{equation}
with the following notations to compute the right side of each equation:
\begin{align*}
A_{1}&=N_{l+t}(n-w-l)^{3}-N_{l}(n-w-l-t)^{3}-N_{n-w}t^{3},\\
A_2&=N_{w+s}^{\perp}(n-w-l)^{3}-N_{w}^{\perp}(n-w-l-s)^{3}-N_{n-l}^{\perp}s^{3},\\
A_3&=C_{n}^{x_1}(n-w-l)^3-C_{n}^{l}(n-w-x_{1})^{3}-C_{n}^{w}(x_1-l)^3,\\
A_4&=C_{n}^{x_2}(n-w-l)^3-C_{n}^{l}(n-w-x_{2})^{3}-C_{n}^{w}(x_2-l)^3.
\end{align*}

\subsection{Estimation of the error}
The result in this subsection is a rough estimation of the error of the approximation from the algorithm described above. It may be improved by using additional information on the input points $x_1,x_2.$
\begin{theorem}\label{thm:error-approx}
Let us denote by $\tilde{h}_{(l;w)}$ and $\tilde{h}_{(w;l)}$ the approximation of the reliability polynomials $h_{(l;w)}$, respectively $h_{(w;l)}$ of two dual hammocks, obtained by means of the above described approximation procedure. Let us denote by
$$M=\max\{(l+1)^{l+1}(n-l-1)^{n-l-1};(w+1)^{w+1}(n-w-1)^{n-w-1}\}.$$
Then the error and the cumulative error of the simultaneous approximation of the reliability polynomials of the two dual hammocks is estimated as:
\begin{align}
\left|h_{(l,w)}(p)-\tilde{h}_{(l,w)}(p)\right|&\leq \frac{M(n-w-l-1)}{n^n}\left|C_n^{\left[\frac{n}{2}\right]}-\min(C_n^{l+1};C_n^{w+1})\right|,\label{eq:21}\\
\left|h_{(w,l)}(p)-\tilde{h}_{(w,l)}(p)\right|&\leq \frac{M(n-w-l-1)}{n^n}\left|C_n^{\left[\frac{n}{2}\right]}-\min(C_n^{l+1};C_n^{w+1})\right|,\label{eq:22}\\
\left|1-\tilde{h}_{(l,w)}(p)-\tilde{h}_{(w,l)}(p)\right|&\leq
\frac{2M(n-w-l-1)}{n^n}\left|C_n^{\left[\frac{n}{2}\right]}-\min(C_n^{l+1};C_n^{w+1})\right|,\label{eq:23}
\end{align}
for all $p\in[0,1].$
\end{theorem}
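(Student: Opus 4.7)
The plan is to express each approximant in Bernstein form, localize the error to the indices at which the spline values $\tilde N_k=f_{(l,w)}(k)$ differ from the exact coefficients $N_k$, and then bound the Bernstein basis function and the coefficient gap uniformly on $[0,1]$.

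First I would set $\tilde N_k:=f_{(l,w)}(k)$ and $\tilde N_k^{\perp}:=f_{(w,l)}(k)$, so that $\tilde h_{(l,w)}(p)=\sum_{k=0}^{n}\tilde N_k\,p^{k}(1-p)^{n-k}$ and similarly for the dual. The interpolation conditions imposed on $f_{(l,w)}$ in (\ref{flw}) give $\tilde N_k=N_k$ for every $k\in\{0,\dots,l\}\cup\{l+t\}\cup\{n-w,\dots,n\}$, hence
\begin{align*}
h_{(l,w)}(p)-\tilde h_{(l,w)}(p)=\sum_{k\in S}(N_k-\tilde N_k)\,p^{k}(1-p)^{n-k},
\end{align*}
where $S\subseteq\{l+1,\dots,n-w-1\}$ has cardinality at most $n-w-l-1$. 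This already explains the combinatorial factor on the right-hand side of (\ref{eq:21}).

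Next I would bound the two factors in the sum separately. For the basis, the classical identity $\max_{p\in[0,1]}p^{k}(1-p)^{n-k}=k^{k}(n-k)^{n-k}/n^{n}$, attained at $p=k/n$, combined with the convexity of $k\mapsto k^{k}(n-k)^{n-k}$ on $\{1,\dots,n-1\}$, forces the maximum over $k\in[l+1,n-w-1]$ to be attained at one of the endpoints, producing $M/n^{n}$. For the coefficient gap I would argue that both $N_k$ and $\tilde N_k$ lie in the interval $[\min(C_n^{l+1},C_n^{w+1}),\,C_n^{[n/2]}]$ for $k\in S$: the upper endpoint follows from $N_k\leq C_n^{k}$ and the fact that $C_n^{[n/2]}$ is the largest binomial coefficient, combined with the dual identity (\ref{suCo}) applied to $\tilde N_k$; the lower endpoint would be derived by propagating the log-concavity of $(N_k)$ from \cite{H,L1} through the known boundary values $N_l$ and $N_{n-w}$, together with the shape-preservation property of the cubic pseudo-Bernstein polynomial guaranteed by \cite{TP}. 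Multiplying $|S|$, $M/n^{n}$, and the uniform coefficient gap $|C_n^{[n/2]}-\min(C_n^{l+1},C_n^{w+1})|$ yields (\ref{eq:21}); (\ref{eq:22}) follows by the symmetric argument after exchanging $l$ and $w$.

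For the cumulative estimate (\ref{eq:23}), I would invoke the Moore--Shannon identity (\ref{hMS}) and decompose
\begin{align*}
1-\tilde h_{(l,w)}(p)-\tilde h_{(w,l)}(1-p)=\bigl(h_{(l,w)}(p)-\tilde h_{(l,w)}(p)\bigr)+\bigl(h_{(w,l)}(1-p)-\tilde h_{(w,l)}(1-p)\bigr),
\end{align*}
and then apply the triangle inequality together with the individual bounds just established (the second evaluated at $1-p\in[0,1]$). The main obstacle, as I see it, is precisely the lower envelope $\min(C_n^{l+1},C_n^{w+1})$ on both sequences over $S$: propagating log-concavity across the middle range from the known boundary values for $(N_k)$, and verifying that the cubic pseudo-Bernstein interpolant genuinely inherits the convexity/concavity pattern of $F_{(l,w)}$ for $(\tilde N_k)$, each require some care. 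If that shape-preservation argument resists, a cleaner but coarser fallback is to replace the lower envelope by $0$, yielding a similarly structured bound with a slightly larger constant.
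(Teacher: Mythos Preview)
Your approach mirrors the paper's proof almost exactly: localize the error to indices $k\in\{l+1,\dots,n-w-1\}$, bound $p^{k}(1-p)^{n-k}$ by $k^{k}(n-k)^{n-k}/n^{n}$ with the maximum over this range attained at an endpoint (yielding $M/n^{n}$), bound each coefficient gap uniformly by $\bigl|C_n^{[n/2]}-\min(C_n^{l+1},C_n^{w+1})\bigr|$, and obtain (\ref{eq:23}) from the Moore--Shannon identity together with the triangle inequality. The step you single out as the main obstacle---the lower envelope $\min(C_n^{l+1},C_n^{w+1})$ for both $N_k$ and $\tilde N_k$---receives no further justification in the paper either; the authors explicitly call the result a rough estimation and simply assert that coefficient bound.
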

\begin{proof}
According to the previously described procedure,
$$\tilde{h}_{(l,w)}(p)=\sum_{k=0}^{n}f_{(l,w)}(k)p^{k}(1-p)^{n-k},$$
$$\tilde{h}_{(w,l)}(p)=\sum_{k=0}^{n}f_{(w,l)}(k)p^{k}(1-p)^{n-k}.$$
Then, according to (\ref{h}) and (\ref{hd}), in view of (\ref{hMS}) and (\ref{suCo}) one can compute:
\begin{align*}
1-\tilde{h}_{(l,w)}(p)-\tilde{h}_{(w,l)}(1-p)&=\sum_{k=0}^{n}\left[C_{n}^{k}-f_{(l,w)}\right]p^{k}(1-p)^{n-k}-\sum_{k=0}^{n}f_{(w,l)}(k)p^{n-k}(1-p)^{k}\\
&=\sum_{k=0}^{n}\left[C_{n}^{k}-f_{(l,w)}(k)-f_{(w,l)}(n-k)\right]p^{k}(1-p)^{n-k}\\
&=\sum_{k=0}^{n}\left[N_{k}-f_{(l,w)}(k)\right]p^{k}(1-p)^{n-k}\\
&+\sum_{k=0}^{n}\left[N_{n-k}^{\perp}-f_{(w,l)}(n-k)\right]p^{k}(1-p)^{n-k}.
\end{align*}
On another hand, functions $u_{k}:[0,1]\rightarrow \mathbb{R}$, $u_{k}(p)=p^{k}(1-p)^{n-k}$, $k\in\{1,2,...,n\}$, have the property that
$$\max\{u_{k}(p)|p\in [0,1]\}=u_{k}\left(\frac{k}{n}\right)=\frac{k^{k}(n-k)^{n-k}}{n^{n}},$$
which motivates the following majorant:
\begin{align*}
\left|h_{(l,w)}(p)-\tilde{h}_{(l,w)}(p)\right|&=\left|\sum_{k=l+1}^{n-w-1}\left[N_{k}-f_{(l,w)}(k)\right]p^{k}(1-p)^{n-k}\right|\\
&\leq \left|\sum_{k=l+1}^{n-w-1}\left[N_{k}-f_{(l,w)}(k)\right]\right|\frac{k^{k}(n-k)^{n-k}}{n^{n}}\\
&\leq \sum_{k=l+1}^{n-w-1}\left|\min(C_n^{l+1};C_n^{w+1})-C_n^{\left[\frac{n}{2}\right]}\right|\frac{k^{k}(n-k)^{n-k}}{n^{n}}.
\end{align*}
Also, one gets in the same manner:
$$\left|h_{(w,l)}(p)-\tilde{h}_{(w,l)}(p)\right|\leq \sum_{k=w+1}^{n-l-1}\left|\min(C_n^{l+1};C_n^{w+1})-C_n^{\left[\frac{n}{2}\right]}\right|\frac{k^{k}(n-k)^{n-k}}{n^{n}}.$$
Now, elementary computation shows that function $v:[0,n]\rightarrow \mathbb{R}$, $v(k)=k^{k}(n-k)^{n-k}$, has the property:
$$\min\{v(k)|k\in [0,n]\}=v\left(\frac{n}{2}\right)=\left(\frac{n}{2}\right)^n,$$
and its maximal value is taken at the frontier of the interval considered within its definition domain. Taking into account all these estimations and extremum properties, one gets the following upper bounds:
$$\left|h_{(l,w)}(p)-\tilde{h}_{(l,w)}(p)\right|\leq \frac{M(n-w-l-1)}{n^n}\left|C_n^{\left[\frac{n}{2}\right]}-\min(C_n^{l+1};C_n^{w+1})\right|,$$
$$\left|h_{(w,l)}(p)-\tilde{h}_{(w,l)}(p)\right|\leq \frac{M(n-w-l-1)}{n^n}\left|C_n^{\left[\frac{n}{2}\right]}-\min(C_n^{l+1};C_n^{w+1})\right|,$$
$$\left|1-\tilde{h}_{(l,w)}(p)-\tilde{h}_{(w,l)}(p)\right|$$
$$\leq\left|C_n^{\left[\frac{n}{2}\right]}-\min(C_n^{l+1};C_n^{w+1})\right|\left[\sum_{k=l+1}^{n-w-1}\frac{k^{k}(n-k)^{n-k}}{n^{n}}+\sum_{k=w+1}^{n-l-1}\frac{k^{k}(n-k)^{n-k}}{n^{n}}\right]$$
$$\leq \frac{2}{n^{n}}\left|C_n^{\left[\frac{n}{2}\right]}-\min(C_n^{l+1};C_n^{w+1})\right|\sum_{k=l+1}^{n-w-1}M$$
$$=\frac{2M(n-w-l-1)}{n^n}\left|C_n^{\left[\frac{n}{2}\right]}-\min(C_n^{l+1};C_n^{w+1})\right|,$$
as required.
\end{proof}
\begin{corollary} One can immediately deduce that \eqref{eq:21}, \eqref{eq:22}, \eqref{eq:23} give the estimation of the error in Chebyshev norm.
\end{corollary}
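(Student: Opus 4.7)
The plan is to work entirely in Bernstein basis, reduce the error to a short sum supported on the window $\{l+1,\dots,n-w-1\}$, and then bound each surviving term by a product of three factors: a pointwise majorant of the corresponding Bernstein monomial, a uniform envelope for the coefficient differences, and the cardinality of the surviving index range. First I would write
$$h_{(l,w)}(p) - \tilde{h}_{(l,w)}(p) = \sum_{k=0}^{n}\bigl[N_k - f_{(l,w)}(k)\bigr]\, p^k(1-p)^{n-k},$$
and invoke the interpolation conditions defining (\ref{flw}) to annihilate every coefficient outside $\{l+1,\dots,n-w-1\}$: indeed $f_{(l,w)}(k) = 0 = N_k$ for $0 \le k \le l-1$, $f_{(l,w)}(l) = N_l$, the cubic piece takes the value $N_{n-w}$ at $x = n-w$ with $N_{n-w} = C_n^w - N_w^{\perp}$ identified via (\ref{suCo}), and the terminal line segments reproduce $N_k = C_n^k$ for $k \ge n-w+1$. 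I would then combine the triangle inequality with the classical calculus extremum
$$\max_{p\in[0,1]} p^k(1-p)^{n-k} = \frac{k^k(n-k)^{n-k}}{n^n},$$
attained at $p = k/n$, in order to strip the dependence on $p$.

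The main obstacle is producing a clean uniform envelope for the surviving coefficient differences. The claim to establish is that, for $k \in \{l+1,\dots,n-w-1\}$, both the exact coefficient $N_k$ and the spline value $f_{(l,w)}(k)$ lie in the common interval $[\min(C_n^{l+1}, C_n^{w+1}),\, C_n^{[n/2]}]$, so that
$$|N_k - f_{(l,w)}(k)| \le \bigl|C_n^{[n/2]} - \min(C_n^{l+1}, C_n^{w+1})\bigr|.$$
The upper cap $C_n^{[n/2]}$ follows from the log-concavity of the Huh--Lenz coefficient sequence $(N_k)$ together with the trivial bound $N_k \le C_n^k$, while the lower endpoint of the envelope reflects the shape-preserving design of the pseudo-Bernstein spline, which by construction mimics the monotonicity pattern of $(N_k)$ across its window of support.

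It remains to collect the extremum. The auxiliary function $v(k) = k^k(n-k)^{n-k}$ is convex on $(0,n)$ with interior minimum at $k = n/2$, so its maximum over the discrete window $\{l+1,\dots,n-w-1\}$ is realized at one of the two endpoints and equals precisely $\max\bigl\{(l+1)^{l+1}(n-l-1)^{n-l-1},\,(w+1)^{w+1}(n-w-1)^{n-w-1}\bigr\} = M$. Summing the envelope over the $n-w-l-1$ surviving indices then yields (\ref{eq:21}). The estimate (\ref{eq:22}) follows verbatim after interchanging the roles of $l$ and $w$, since the analogous window $\{w+1,\dots,n-l-1\}$ has the same length and the same endpoint extremum of $v$ reappears. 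Finally, (\ref{eq:23}) is obtained by using (\ref{hMS}) to decompose
$$1 - \tilde{h}_{(l,w)}(p) - \tilde{h}_{(w,l)}(1-p) = \bigl[h_{(l,w)}(p) - \tilde{h}_{(l,w)}(p)\bigr] + \bigl[h_{(w,l)}(1-p) - \tilde{h}_{(w,l)}(1-p)\bigr],$$
and applying the triangle inequality together with (\ref{eq:21}) and (\ref{eq:22}) to produce the factor-of-two majorant, uniformly in $p \in [0,1]$.
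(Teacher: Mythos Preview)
Your argument is correct in content, but it is a proof of the preceding Theorem~\ref{thm:error-approx}, not of the corollary. The paper has already established \eqref{eq:21}--\eqref{eq:23} in the theorem; the corollary is merely the one-line observation that, because those three inequalities hold \emph{for every} $p\in[0,1]$, one may pass to the supremum over $p$ on the left-hand side and obtain the same right-hand side as a bound on $\lVert h_{(l,w)}-\tilde h_{(l,w)}\rVert_\infty$, $\lVert h_{(w,l)}-\tilde h_{(w,l)}\rVert_\infty$, and $\lVert 1-\tilde h_{(l,w)}-\tilde h_{(w,l)}\rVert_\infty$ respectively. That is why the paper offers no proof at all and says ``immediately''.

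What you have written---reducing to the Bernstein window $\{l+1,\dots,n-w-1\}$ via the interpolation conditions, maximizing $p^k(1-p)^{n-k}$ at $p=k/n$, enveloping $|N_k-f_{(l,w)}(k)|$ by $|C_n^{[n/2]}-\min(C_n^{l+1},C_n^{w+1})|$, bounding $k^k(n-k)^{n-k}$ by $M$ at the window endpoints, and summing the $n-w-l-1$ surviving terms---is exactly the paper's proof of the theorem itself. So your reasoning is sound and matches the paper step for step, but it is aimed one statement too early: for the corollary all that is required is the phrase ``take $\sup_{p\in[0,1]}$ on both sides of \eqref{eq:21}--\eqref{eq:23}''.
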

\subsection{The algorithm}\label{alg}
\begin{description}
  \item[Step 1] Compute the values $N_{l}$, $N_{l+1}$, $N_{w}^{\perp}$ and $N_{w+1}^{\perp}$, using the technique from \cite{BC1}. Compute the values $N_{n-w}$, $N_{n-w-1}$, $N_{n-l}^{\perp}$ and $N_{n-l-1}^{\perp}$ using (\ref{suCo}).
  \item[Step 2] Write the system of equations (\ref{system}) and solve it.
  \item[Step 3] Write functions $f_{(l,w)}$ and $f_{(w,l)}$ using (\ref{flw}) and (\ref{fwl}) respectively.
  \item[Step 4] Compute $f_{(l,w)}(k)$, $k\in \{l+2, l+3,..., n-w-2\}$ and write the approximate reliability polynomial $$\tilde{h}_{(l,w)}(p)=\sum_{k=0}^{n}f_{(l,w)}(k)p^{k}(1-p)^{n-k}.$$
  \item[Step 5] Compute $f_{(w,l)}(k)$, $k\in \{w+2, w+3,..., n-l-2\}$ and write the approximate reliability polynomial $$\tilde{h}_{(w,l)}(p)=\sum_{k=0}^{n}f_{(w,l)}(k)p^{k}(1-p)^{n-k}.$$
\end{description}

Eventually, we can add an extra step for evaluating the upper bound and the lower bound of the error of our approximation, according to Theorem \ref{thm:error-approx}.

\medskip

\section{Applications in case of unique solution}
\subsection{General model with unique solution}
System (\ref{system}) has unique solution if one takes $x_1=l+1$ and $x_2=n-w-1$. The conditions that generate the system of equations become, in view of (\ref{suCo}):
\begin{equation}
\left\{
  \begin{array}{ll}
    f_{(l,w)}(l+t)=N_{l+t} \\
    f_{(w,l)}(w+s)=N_{w+s}^{\perp} \\
    f_{(w,l)}(n-l-t)=C_{n}^{l+t}-N_{l+t}=N_{n-l-t}^{\perp} \\
    f_{(l,w)}(n-w-s)=C_{n}^{w+s}-N_{w+s}^{\perp}=N_{n-w-s} \\
    \end{array}
\right.
\end{equation}
After the required computation, the system of equations (\ref{system}) becomes:
\begin{equation}\label{unic}
\left\{
  \begin{array}{ll}
   3(n-w-t)^{2}ta+ 3(n-w-l-t)t^{2}b=A_1 \\
   3(n-w-s)^{2}sc+ 3(n-w-l-s)s^{2}d=A_2 \\
   3t^{2}(n-w-l-t)c+3t(n-w-l-t)^{2}d=A_3^1 \\
   3s^{2}(n-w-l-s)a+3s(n-w-l-s)^{2}b=A_4^1,
  \end{array}
\right.
\end{equation}
with the following notations to compute the right side of each equation:
\begin{align*}
A_1&=N_{l+t}(n-w-l)^{3}-N_{l}(n-w-l-t)^{3}-N_{n-w}t^{3},\\
A_2&=N_{w+s}^{\perp}(n-w-l)^{3}-N_{w}^{\perp}(n-w-l-s)^{3}-N_{n-l}^{\perp}s^{3},\\
A_3^1&=N_{n-l-t}^{\perp}(n-w-l)^3-N_{n-l}^{\perp}(n-w-l-t)^{3}-N_{w}^{\perp}t^3,\\
A_4^1&=N_{n-w-s}(n-w-l)^3-N_{n-w}(n-w-l-s)^{3}-N_{l}s^{3}.
\end{align*}

In fact, the system of equation (\ref{unic}) consists in two linear systems of two equations of two variables, as follows:
\begin{equation}\label{uni1}
\left\{
  \begin{array}{ll}
   3(n-w-t)^{2}ta+ 3(n-w-l-t)t^{2}b=A_1 \\
   3s^{2}(n-w-l-s)a+3s(n-w-l-s)^{2}b=A_4^1,
  \end{array}
\right.
\end{equation}

\begin{equation}\label{uni2}
\left\{
  \begin{array}{ll}
   3(n-w-s)^{2}sc+ 3(n-w-l-s)s^{2}d=A_2 \\
   3t^{2}(n-w-l-t)c+3t(n-w-l-t)^{2}d=A_3^1.
  \end{array}
\right.
\end{equation}

The algorithm \ref{alg} is now adapted to equations \eqref{unic}.

We use our algorithm to approximate the coefficients of some small dimension hammocks, i.e., the $3\times5$, and the $5\times 5$ hammocks. The implementation of the algorithm was done in Maple software. The exact reliability polynomials considered here are taken from \cite{CBDP2}. We also compute upper and lower bounds for such networks, more exactly Stanley type of bounds (see \cite{BC1}), denoted by

\begin{align*}
\mathsf{LB}&=\left[N_l, \left\{N_{l+1}\dfrac{C_{n}^{i}}{C_{n}^{l+1}}\right\}_{i=l+1\dots n-w-2},N_{n-w-1},N_{n-w},\left\{C_{n}^{i}\right\}_{i=n-w-1\dots n}\right];\\
\mathsf{UB}&=\left[N_l, N_{l+1},\left\{N_{n-w-1}\dfrac{C_{n}^{i}}{C_{n}^{n-w-1}}\right\}_{i=l+2\dots n-w-2},N_{n-w},\left\{C_{n}^{i}\right\}_{i=n-w-1\dots n}\right].\\
\end{align*}

Straightforward we obtain

\begin{align*}h_{(l,w)}(p)&\ge N_lp^l(1-p)^{n-l}+\sum\limits_{i=l+1}^{n-w-2}N_{l+1}\dfrac{C_{n}^{i}}{C_{n}^{l+1}}p^i(1-p)^{n-i}+N_{n-w-1}p^{n-w-1}(1-p)^{w+1}\\
&+N_{n-w}p^{n-w}(1-p)^{w}+\sum\limits_{i=n-w-1}^{n}C_{n}^{i}p^i(1-p)^{n-i};\\
h_{(l,w)}(p)&\le N_lp^l(1-p)^{n-l}+N_{l+1}p^{l+1}(1-p)^{n-l-1}+\sum\limits_{i=l+2}^{n-w-1}N_{n-w-1}\dfrac{C_{n}^{i}}{C_{n}^{n-w-1}}p^i(1-p)^{n-i}\\
&+N_{n-w}p^{n-w}(1-p)^{w}+\sum\limits_{i=n-w-1}^{n}C_{n}^{i}p^i(1-p)^{n-i}.
\end{align*}
\medskip

\subsection{The 3 by 5 hammock and its dual}~

\smallskip

For the 3 by 5 hammock we have the following results:
$$
\begin{array}{|r||r|r|r|r|r|r|r|r|}
\hline
\mathsf{LB}&21&194&249&249&194&116&1187&439\\
\hline
f_{(l,w)}(k)& 21&194&561&982&1320&1434&1187&439\\
\hline
N_k&21&194 &782&1772&2443&2114&1187&439\\
\hline
\mathsf{UB}&21&194&5596&5596&4352&2611&1187&439\\
\hline
\end{array}
$$
\begin{figure}
\centering
  \subfloat[For the 3 by 5 hammock.]{\includegraphics[width=.45\textwidth]{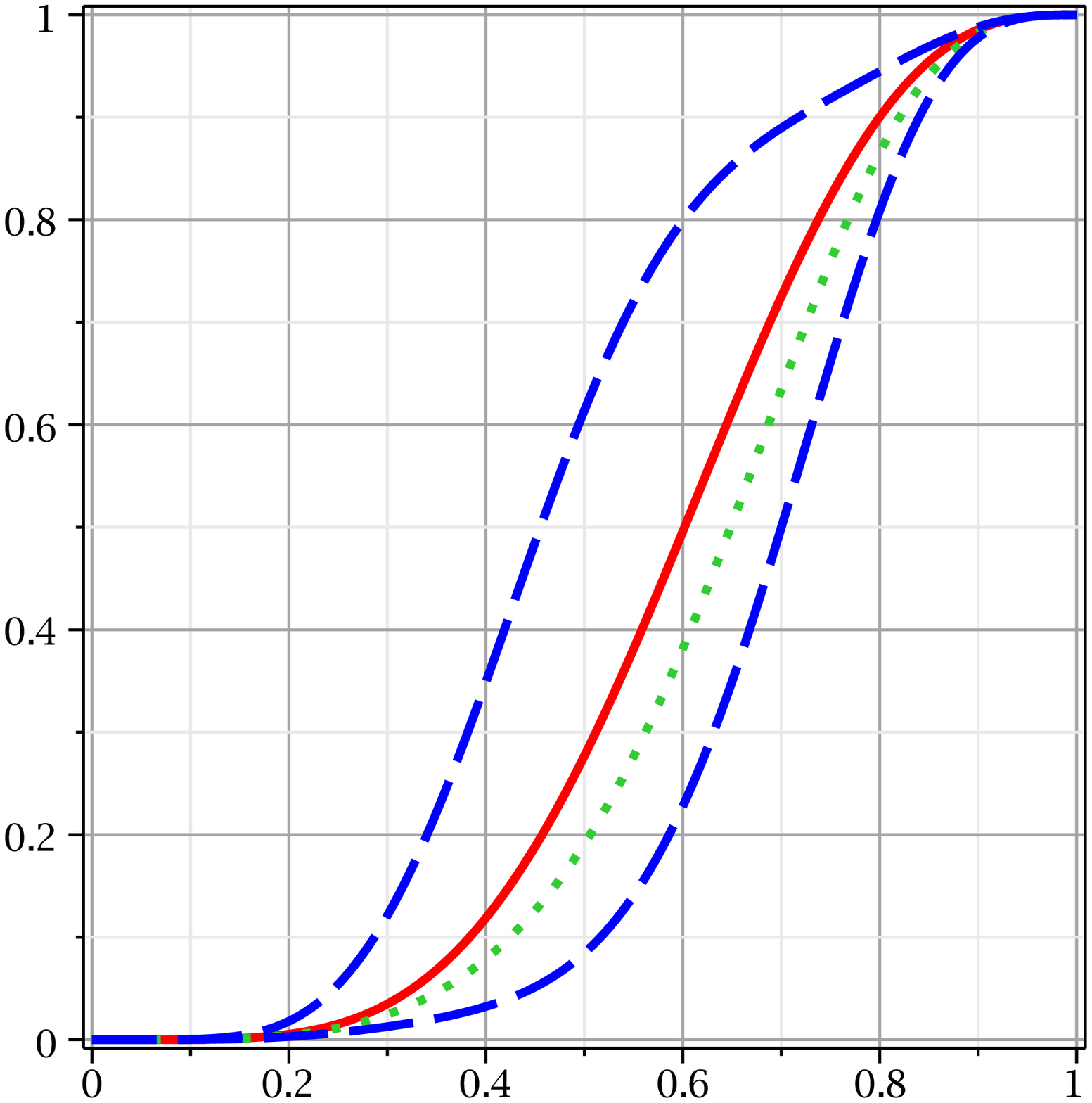}\label{fig1}}
  \hspace{0.1cm}
  \subfloat[For the dual (5 by 3) network.]{\includegraphics[width=.45\textwidth]{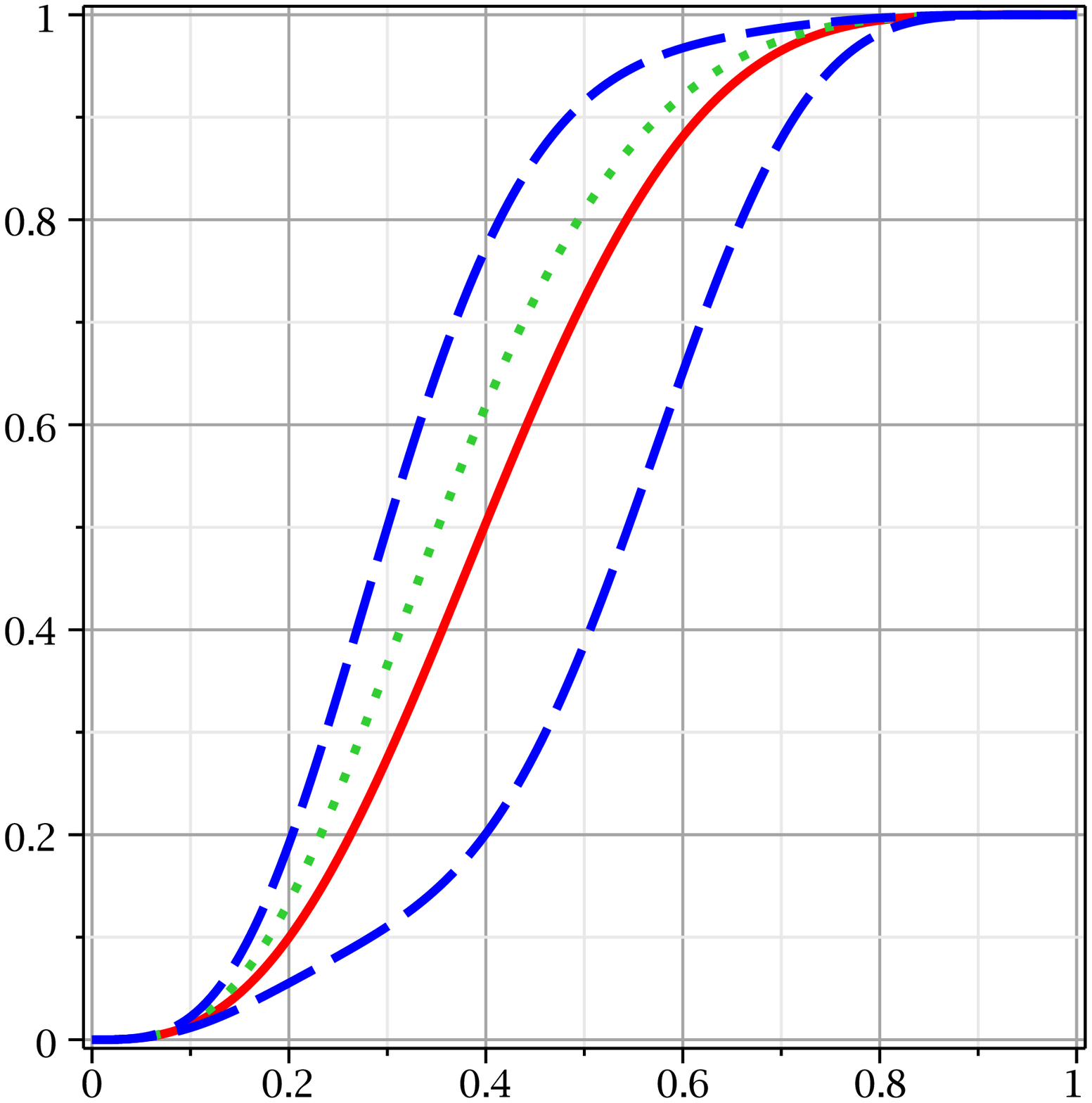}\label{fig2}}
  \vspace{0.1cm}
   \caption{Reliability polynomial (red solid line), our approximation of the polynomial with $s=1,t=1$ (green dotted line), the upper and the lower bounds (blue dashed lines).}\label{fig:1-2}
\end{figure}

As one can see, the approximation is satisfactory if the error is computed in Chebyshev norm or equivalence, fact illustrated in Fig. \ref{fig1} for the 3 by 5 hammock, and Fig. \ref{fig2} for its dual.

\vspace{2cm}

\subsection{The 5 by 5 hammock}~~

\vspace{1cm}

\begin{table}[!h]
\begin{center}
\begin{tabular}{|c|c||c||c||c||c|}
\hline
$k$&$\mathsf{LB}$&$N_k$&$\mathsf{UB}$&$f_{(l,w)}(k)$&$C_{n}^{k}$\\
\hline\hline
5&52&52&52&52&53130\\
6&994&994&994&994&177100\\
7&2698&8983&478002&20757&480700\\
8&6070&50796&1075504&55084&1081575\\
9&11466&200559&2031508&99716&2042975\\
10&18346&584302&3250414&150396&3268760\\
11&25018&1294750&4432382&202866&4457400\\
12&29187&2220298&5171113&252867&5200300\\
13&29187&2980002&5171113&296143&5200300\\
14&25018&3162650&4432382&328434&4457400\\
15&18346&2684458&3250414&345484&3268760\\
16&11466&1842416&2031508&343034&2042975\\
17&6070&1030779&1075504&316826&1081575\\
18&2698&471717&478002&262603&480700\\
19&176106&176106&176106&176106&177100\\
20&53078&53078&53078&53078&53130\\
\hline
\end{tabular}
\end{center}
\caption{Sequence of coefficients $N_k$ of the $5\times5$ hammock, the lower and upper bounds, the values of $f_{l,w}(k)$ for $s=t=1$, and the binomial coefficients, for $k\in\{l,\dots n-w\}.$}
\end{table}

\begin{figure}[!ht]
\includegraphics[width=0.5\textwidth,height=0.4\textwidth]{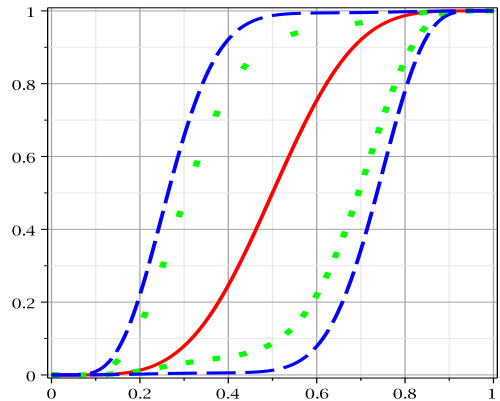}
\caption{Reliability polynomial of the 5 by 5 hammock networks (red solid line), our approximation of the polynomial with $s=1,t=1$ (green dotted line), the dual of our approximation (green space-dotted line), and the upper and the lower bounds (blue dashed lines).}
\end{figure}

\subsection{Improvements for the 5 by 5 hammock}~~

A natural improvement of our method can be applied when extra coefficients are known, fact that we illustrate in Fig. \ref{fig:spline+cubic}. Notice that with extra 3 coefficients (the green dotted line), i.e., the spline approximation using 7 points out of 16, gives extremely sharp results. However, we notice that our method applied with $s=9,t=1$ (magenta dash-dotted line in Fig. \ref{fig:spline+cubic}) provides better approximations for the last coefficients, than the Spline approximation with 7 points.

\begin{figure}[!ht]
\includegraphics[width=0.5\textwidth,height=0.4\textwidth]{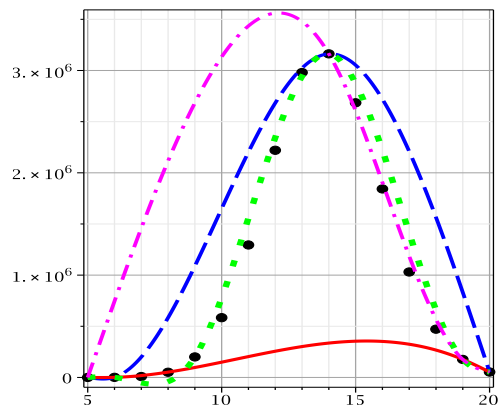}
\caption{Coefficients of the 5 by 5 hammock networks (black solid circles), coefficients obtained by our approximation (solid red line for $s=t=1$, dashed blue line for $s=1,t=6$, dash-dotted magenta line for $s=9,t=1$), and coefficients obtained by Spline approximations (dotted green line) when quasi-uniformly distributed $N_i$ are known (in this case $N_5,N_6,N_9,N_{14},N_{19},N_{20}$) .}
\label{fig:spline+cubic}
\end{figure}
\section{Conclusions}
An algorithm to approximate the reliability polynomials of two dual 2TNs is deduced and tested through simulations. The mathematical model leading to this algorithm is based on approximating the coefficients function of the reliability polynomial of a hammock network by a cubic spline function. The cubic spline is generated taking into account the complementarity with respect to the definite integration over the definition domain of the spline functions of coefficients of two dual hammock networks. The cubic spline scheme was chosen, using Bernstein and B\'{e}zier type approximation operators, to use their property of preserving the convexity and concavity of these operators. The input data is not sufficient for classically writing the approximant. Therefore, the mutual behaviour of the reliability polynomials of two dual hammock networks is used to generate a system of input constraints. These constraints are sufficient to produce the compatible system of equations that give the average value of the coefficients of two complementary reliability polynomials, but they are not sufficient for a refined estimation of the error. A convenient choice of the input constraints may refine the upper and lower bounds of the error. Numerical applications show that our algorithm produces results having better upper and lower bounds of the error, in Chebyshev norm, than in case of other approximation schemes from literature.

\medskip
\section*{Aknowledgement}
V-F. Dr\u{a}goi was supported by the European Union through the European Regional Development Fund under the Competitiveness Operational Program through the Project on Novel Bio-inspired Cellular Nano-Architectures under Grant POC-A1.1.4-E-2015 nr. 30/01.09. 2016.


\begin{thebibliography}{99}
\bibitem{B1} M.O. Ball, Computational complexity of network reliability analysis: An overview, IEEE Transactions on Reliability, 35(1986), 230--239.
\bibitem{Ber} S.N. Bernstein, D\'{e}monstration du Th\'{e}or\`{e}me de Weierstrass fond\'{e}e sur le calcul des Probabilit\'{e}s, Comm. Soc. Math. Kharkow (2), Series XIII No.1 (1912), 1--2.
\bibitem{Bez1} P. B\'{e}zier, D\'{e}finition Num\'{e}rique des Courbes et Surfaces, Automatisme 11(1966), 625--632.
\bibitem{Bez2} P. B\'{e}zier, D\'{e}finition Num\'{e}rique des Courbes et Surfaces (II), Automatisme 12(1967), 17--21.
\bibitem{BC1} T.B. Brecht, C.J. Colbourn, Lower bounds on two-terminal network reliability, Discrete Applied Mathematics 21 (1988), 185--198.
\bibitem{CL} G. Cristescu, L. Lup\c{s}a, \emph{Non-connected Convexities and Applications}, Kluwer Academic Publishers, Dordrecht, Holland, 2002.
\bibitem{CBDP1} S. R. Cowell, V. Beiu, L. D\v{a}u\c{s}, P. Poulin, On hammock networks –- Sixty years after, in Proc. Int. Conf. Design \& Technol. Integr.
Syst. Nanoscale Era (DTIS), Palma de Mallorca, Spain, Apr. 2017, art. 7929871.
\bibitem{CBDP2} S. R. Cowell, V. Beiu, L. D\v{a}u\c{s}, P. Poulin, On the exact reliability enhancements of small hammock networks. IEEE Access, 6 (2018), 25411--25426.
\bibitem{H} J. Huh, $h$-Vectors of matroids and logarithmic concavity, Advances in Mathematics 270 (2015), 49--59.
\bibitem{L1} M. Lenz, The $f$-vector of a representable-matroid complex is log-concave, Advances in Applied Mathematics 51 (2013), 543--545.
\bibitem{L2} G.G. Lorentz, \emph{Bernstein polynomials}, Univ. Toronto Press, 1953.
\bibitem{MS1} E.F. Moore, C.E. Shannon, Reliable circuits using less reliable relays - Part I. Journal of the Franklin Institute 262(3)(1956), 191 –- 208.
\bibitem{MS2} E.F. Moore, C.E. Shannon, Reliable circuits using less reliable relays - Part II. Journal of the Franklin Institute 263(4)(1956), 281 –- 297.
\bibitem{N} I.P. Natanson, \emph{Constructive function theory. Vol. I. Uniform approximation}, New York: Frederick Ungar Publishing Co. 1964.
\bibitem{TP} T. Popoviciu, Sur l’approximation des fonctions convexes d’ordre sup\'{e}rieur (I), Mathematica (Cluj), 10 (1935), 49--54.
\bibitem{V1} L. Valiant, The complexity of enumeration and reliability problems, SIAM Journal on Computing, 8(1979), 410--421.
\end{thebibliography}
\end{document}